\pdfoutput=1
\documentclass[11pt]{amsart}
\usepackage[margin=1.15in]{geometry}
\usepackage{amsmath,amssymb,amsthm}
\usepackage{graphicx}
\usepackage[hidelinks]{hyperref}

\newtheorem{theorem}{Theorem}
\newtheorem{lemma}[theorem]{Lemma}
\newtheorem{proposition}[theorem]{Proposition}

\theoremstyle{remark}
\newtheorem{remark}[theorem]{Remark}
\newtheorem{example}[theorem]{Example}

\newcommand{\Hur}{\mathcal{H}}
\newcommand{\Nm}{\mathrm{N}}
\newcommand{\tr}{\operatorname{tr}}
\newcommand{\F}{\mathbb{F}}
\newcommand{\legendre}[2]{\left(\tfrac{#1}{#2}\right)}

\title[Conjugation invariants and metacommutation]{Conjugation invariants determine the metacommutation permutation only up to relabelling}
\author{Matthew Fried}
\date{\today}

\begin{document}

\begin{abstract}
Let $\Hur$ be the Hurwitz quaternions, $p$ an odd prime, and $Q\in\Hur$ a prime of norm $q\neq p$. Metacommutation $PQ=Q'P'$ induces a permutation $\pi_Q$ of the $p+1$ left-associate classes of primes of norm $p$. Cohn--Kumar compute its sign and fixed-point count, and Leite--Machiavelo its full cycle structure, by formulas depending on $Q$ only through conjugation-invariant data ($q$ and $\tr Q$). We prove that this is all such data can ever carry: $\pi_Q$ is constant on the unit-conjugacy orbit of $Q$ if and only if $\pi_Q=\mathrm{id}$, because the Hurwitz unit group maps onto a subgroup of $\mathrm{PGL}_2(\F_p)$ with trivial centralizer. Hence, away from the identity, no conjugation-invariant function of $Q$ determines the labelled permutation, or even one specified value $\pi_Q(C)$. The minimal case is made fully explicit: for $(p,q)=(3,5)$ the primes $2+i,\,2+j,\,2+k,\,2-i$ form a single unit-conjugacy orbit yet induce four pairwise distinct $4$-cycles of the same four classes, disagreeing at every class. We further note that splittings $\Hur/p\Hur\cong M_2(\F_p)$ form a torsor under $\mathrm{PGL}_2(\F_p)$, and that by orbit--stabilizer one value $\pi_Q(C)$ is exactly one coset of the stabilizer of $C$. All sixteen refactorizations in the witness are listed in an appendix and verified by machine along two independent routes.
\end{abstract}

\maketitle

\section{Introduction}

Let $\Hur=\mathbb{Z}\langle i,j,\tfrac{1}{2}(1+i+j+k)\rangle$ be the Hurwitz order, with norm $\Nm$, trace $\tr(Q)=Q+\bar Q$, and unit group $\Hur^\times$ of order $24$. Conway and Smith \cite{CS} showed that factorizations of a Hurwitz integer modelling a fixed ordered factorization of its norm are related by unit migration, recombination, and \emph{metacommutation}: primes $P,Q$ of distinct norms $p,q$ satisfy
\[
PQ \;=\; Q'P', \qquad \Nm(Q')=q,\ \Nm(P')=p,
\]
with $(Q',P')$ unique up to unit migration $(Q'u^{-1},uP')$. For $p$ odd, fixing $Q$ and letting $P$ range over primes of norm $p$ gives a permutation $\pi_Q$ of the set $\mathcal{C}_p$ of $p+1$ left-associate classes $[P]=\Hur^\times P$.

Much is known about $\pi_Q$. Cohn and Kumar \cite{CK} proved $\operatorname{sgn}\pi_Q=\legendre{q}{p}$ and that $\pi_Q$ has $1+\legendre{\tr(Q)^2-4q}{p}$ fixed points unless $Q$ is congruent mod $p$ to a rational integer, in which case $\pi_Q=\mathrm{id}$. Forsyth, Gurev, and Shrima \cite{FGS} identified $\pi_Q$ with the action of an element of $\mathrm{PGL}_2(\F_p)$ on $\mathbb{P}^1(\F_p)$; Leite and Machiavelo \cite{LM} determined the cycle structure completely; Babei and Chari \cite{BC} extended the theory to Eichler orders. Every one of these formulas reads off $Q$ only its norm and trace, hence only its unit-conjugacy class: the \emph{cycle type} of $\pi_Q$ is determined by conjugation-invariant data.

This note proves that nothing else is.

\begin{theorem}\label{thm:A}
Let $p$ be an odd prime and $Q$ a Hurwitz prime of norm $q\neq p$. The following are equivalent:
\begin{enumerate}
\item[(i)] $\pi_{uQu^{-1}}=\pi_Q$ for every $u\in\Hur^\times$;
\item[(ii)] $\pi_Q=\mathrm{id}$;
\item[(iii)] $Q\equiv n \pmod{p\Hur}$ for some $n\in\mathbb{Z}$.
\end{enumerate}
Consequently, if $I$ is any function on Hurwitz primes of norm $q$ with $I(uQu^{-1})=I(Q)$ for all units $u$, then $Q\mapsto\pi_Q$ is not a function of $I(Q)$ on any set of primes containing some $Q$ with $\pi_Q\neq\mathrm{id}$ together with its unit conjugates.
\end{theorem}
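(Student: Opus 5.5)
We will work in a splitting $\phi:\Hur/p\Hur\xrightarrow{\sim}M_2(\F_p)$. Under $\phi$, a prime $P$ of norm $p$ has reduction $\phi(P)$ of rank one, and the left-associate class $[P]$ corresponds to the kernel $\ker\phi(P)\in\mathbb{P}^1(\F_p)$. Left multiplication by units does not change this kernel, so the correspondence $[P]\mapsto\ker\phi(P)$ is a bijection $\mathcal{C}_p\to\mathbb{P}^1(\F_p)$. This is the identification of Forsyth--Gurev--Shrima.

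From $PQ=Q'P'$ we get $\ker\phi(P')=\ker\phi(PQ)=\phi(Q)^{-1}\ker\phi(P)$, since $\phi(Q)$ is invertible because $q\ne p$. Hence $\pi_Q$ acts on $\mathbb{P}^1(\F_p)$ as the class $[\phi(Q)^{-1}]\in\mathrm{PGL}_2(\F_p)$. This action is faithful, so $\pi_Q=\mathrm{id}$ exactly when $\phi(Q)$ is scalar. That holds exactly when $Q\equiv n\pmod{p\Hur}$ for some $n\in\mathbb{Z}$, since the scalars of $M_2(\F_p)$ are the image of $\mathbb{Z}$. This gives (ii)$\Leftrightarrow$(iii).

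For (iii)$\Rightarrow$(i), note that if $Q\equiv n$ then $uQu^{-1}\equiv n$ as well. Both permutations are then the identity, so they agree.

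The substantive direction is (i)$\Rightarrow$(ii). Write $g=[\phi(Q)]$ and let $G$ be the image of $\Hur^\times$ in $\mathrm{PGL}_2(\F_p)$. Since $\phi(uQu^{-1})=\phi(u)\phi(Q)\phi(u)^{-1}$, condition (i) says that $g$ commutes with every element of $G$. So it suffices to prove the following.

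\emph{Key claim: the centralizer of $G$ in $\mathrm{PGL}_2(\F_p)$ is trivial.}

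The elements $\pm1$ of $\Hur^\times$ map to the identity. The subgroup $\{\pm1,\pm i,\pm j,\pm k\}$ therefore maps to a Klein four-group $V=\{1,[\phi(i)],[\phi(j)],[\phi(k)]\}$. Its elements are distinct and nontrivial because $i,j,k,i\pm j$ are not scalars mod $p$: $1,i,j,k$ are $\F_p$-linearly independent in $\Hur/p\Hur$ for odd $p$. The unit $\omega=\tfrac12(-1+i+j+k)$ has order $3$, conjugation by $\omega$ permutes $i\to j\to k$ cyclically, and so $G\cong A_4$.

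To show the centralizer is trivial, suppose $h\in\mathrm{PGL}_2(\F_p)$ commutes with $G$, and lift it to $H\in\mathrm{GL}_2(\F_p)$.
\begin{enumerate}
\item[(a)] Commuting with $[\phi(i)]$ means $H\phi(i)H^{-1}=\lambda\phi(i)$. Squaring gives $\lambda^2=1$, and if $\lambda=-1$ then $H$ anticommutes with $\phi(i)$. The same holds for $j$ and $k$.
\item[(b)] Let $\psi:M_2(\F_p)\to\Hur/p\Hur$ denote the inverse of $\phi$. Then $\psi(H)$ lies in $\Hur/p\Hur=\F_p\langle i,j\rangle$. Write $\psi(H)=a+bi+cj+dk$. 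Conjugation by $i$ acts by sign changes on these coordinates, so the sign conditions from (a) force $\psi(H)$ to be a single basis monomial: one of $a$, $bi$, $cj$, $dk$. Thus $h\in\{1\}\cup V$.
\item[(c)] Commuting with $[\phi(\omega)]$ excludes $V\setminus\{1\}$, because $\omega$ permutes $i,j,k$ cyclically and hence does not fix any nontrivial element of $V$ under conjugation. Therefore $h=1$.
\end{enumerate}
Applying the claim to $h=g$ gives $g=1$, that is, $\pi_Q=\mathrm{id}$.

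The main obstacle is making the bookkeeping in (a)--(b) honest. One has to handle projective rather than linear commutation, which is why $\lambda=\pm1$ must be tracked, and keep clear the distinction between $\phi$ and its inverse. The case $p=3$ needs no separate treatment: $2$ is invertible mod $3$, so $\omega$ is still defined in $\Hur/3\Hur$.

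For the consequence, take $Q$ with $\pi_Q\ne\mathrm{id}$. The equivalence (i)$\Leftrightarrow$(ii) then gives a unit $u$ with $\pi_{uQu^{-1}}\ne\pi_Q$, while $I(uQu^{-1})=I(Q)$ by hypothesis. Hence $Q\mapsto\pi_Q$ cannot factor through $I$ on any set containing $Q$ and its unit conjugates.
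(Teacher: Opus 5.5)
Your proof is correct and follows the paper's architecture. You realize $\pi_Q$ as the action of an element of $\mathrm{PGL}_2(\F_p)$ on $\mathbb{P}^1(\F_p)$ and use faithfulness for (ii)$\Leftrightarrow$(iii). You then reduce (i)$\Rightarrow$(ii) to the triviality of the centralizer of the unit image, which is killed by the Klein four-group $V$ together with an order-$3$ unit.

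Two differences are minor. Working with kernels rather than row spaces is cosmetic: the kernel of a rank-one matrix is the annihilator of its row space, so you get $[\phi(Q)^{-1}]$ on column lines instead of $\phi(Q)$ on row lines. Applying the model directly to $uQu^{-1}$ lets you bypass the equivariance identity $\pi_{uQu^{-1}}=\rho_u\pi_Q\rho_u^{-1}$, which the paper still wants for its remark and example.

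The substantive difference is how you prove $C(V)=V$. The paper passes to $\mathrm{PGL}_2(\overline{\F}_p)$, normalizes $V$ to $\{1,\sigma,\tau,\sigma\tau\}$, and checks diagonal and antidiagonal classes. You stay inside $\Hur/p\Hur$ and use three facts:
\begin{itemize}
\item $1,i,j,k$ is an $\F_p$-basis for odd $p$;
\item the projective scalar satisfies $\lambda^2=1$ because $\phi(i)^2=-1$ is scalar;
\item conjugation by $i$ and by $j$ acts on the coordinates through sign characters, which forces a lift of a centralizing element to be a multiple of a single monomial.
\end{itemize}
Your route is frame-free and needs no extension of scalars. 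It also gets distinctness of $[\phi(i)],[\phi(j)],[\phi(k)]$ more cheaply, from linear independence, where the paper uses a size estimate on coordinates of $p\Hur$; that estimate proves the stronger fact that $\pm1$ are the only units congruent to integers. The paper's normal-form argument, in exchange, is a statement about arbitrary Klein four-subgroups of $\mathrm{PGL}_2$ in odd characteristic, independent of the quaternion presentation.

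There are small slips, none affecting validity:
\begin{itemize}
\item ``$i\pm j$ not scalar'' is not the relevant condition. What matters is $i-\lambda j\neq0$ for all $\lambda$, or equivalently that $k=-ij^{-1}$ is not scalar. Your linear-independence justification covers this.
\item $\omega=\rho^2$ conjugates $i\to k\to j$, the reverse of the cycle you state. This is harmless for step (c).
\item The equality $\ker\phi(P')=\ker\phi(PQ)$ silently uses invertibility of $\phi(Q')$ as well as of $\phi(Q)$.
\end{itemize}
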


The proof is short: metacommutation and unit conjugation act on $\mathcal{C}_p$ through one copy of $\mathrm{PGL}_2(\F_p)$ (Lemma~\ref{lem:model}), and the image of the unit group there has trivial centralizer (Lemma~\ref{lem:centralizer}). The failure of invariance is then witnessed in the smallest case in the strongest form:

\begin{theorem}\label{thm:B}
For $(p,q)=(3,5)$, the four Hurwitz primes $2+i,\ 2+j,\ 2+k,\ 2-i$ lie in a single $\Hur^\times$-conjugacy orbit --- so \emph{every} conjugation-invariant function agrees on them --- yet they induce four pairwise distinct $4$-cycles of the four classes of norm-$3$ primes, and their values disagree at every single class.
\end{theorem}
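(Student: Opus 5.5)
The plan is to prove the two halves of Theorem~\ref{thm:B} separately. The orbit statement is a short unit computation. The permutation statement is a finite computation, which I would organize through Lemma~\ref{lem:model} so that it is checkable by hand.

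\emph{Single orbit.} Let $\omega=\tfrac12(1+i+j+k)\in\Hur^\times$. A direct multiplication gives $\omega i\omega^{-1}=j$, $\omega j\omega^{-1}=k$ and $\omega k\omega^{-1}=i$. Hence $2+i\mapsto 2+j\mapsto 2+k$ under conjugation by $\omega$. Also $j\,i\,j^{-1}=-i$, so $j(2+i)j^{-1}=2-i$. All four elements have norm $5$, and they lie in one $\Hur^\times$-conjugacy orbit. In particular every conjugation-invariant function agrees on them.

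\emph{The four classes.} Next I enumerate $\mathcal C_3$. The elements of norm $3$ are the $32$ elements with three coordinates $\pm1$ and one coordinate $0$, together with the $64$ elements with one coordinate $\pm\tfrac32$ and three coordinates $\pm\tfrac12$. That is $96=24\cdot 4$ elements, so there are $4$ left-associate classes. I fix representatives, say $P_1=1+i+j$, $P_2=1-i+j$, $P_3=1+i-j$, $P_4=1-i-j$, and check that they are pairwise non-associate. The check is whether $P_a\bar P_b\in 3\Hur$, since two norm-$3$ primes are left associates exactly when $P_a\bar P_b\in 3\Hur$. If these four turn out not to be pairwise non-associate, I swap in $1+j+k$ or $1+i+k$.

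\emph{Computing $\pi_Q$.} For $Q$ fixed and each $P_a$, the class $\pi_Q([P_a])$ is the unique $[P_b]$ with
\[
P_aQ\,\bar P_b\in 3\Hur .
\]
Then $Q'=P_aQ\bar P_b/3$ is automatically a prime of norm $5$. Uniqueness of $[P_b]$ holds because $P_aQ$ is primitive, being a product of primes of distinct norms.

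This gives $16$ products to compute. I would cross-check them in two ways. The first is directly in $\Hur$. The second uses the model of Lemma~\ref{lem:model}: fix an isomorphism $\Hur/3\Hur\cong M_2(\F_3)$, send $[P]$ to the line $\ker$ or image of $\bar P$ in $\mathbb P^1(\F_3)$, and let $Q$ act through its image in $\mathrm{PGL}_2(\F_3)$.

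\emph{Reading off the conclusions.} Here $\legendre{5}{3}=-1$, so each $\pi_Q$ is an odd permutation. Also $\tr(Q)^2-4q=-4\equiv 2$ is a nonsquare mod $3$, so each $\pi_Q$ has no fixed points. An odd fixed-point-free permutation of four points is a $4$-cycle, so each $\pi_Q$ is a $4$-cycle.

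Conjugation relates the four permutations by $\pi_{uQu^{-1}}=\rho(u)\pi_Q\rho(u)^{-1}$, so I only need to compute $\pi_{2+i}$ honestly and can transport it by $\rho(\omega)$ and $\rho(j)$ as a consistency check. It then remains to verify that $\sigma(C)\neq\tau(C)$ for all $6$ pairs $\sigma\neq\tau$ and all $4$ classes $C$. Equivalently, $\sigma^{-1}\tau$ is fixed-point-free for each pair, which is a finite table check.

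\emph{Main obstacle.} The conceptual content is light, so the hard part is bookkeeping. Classes must be labelled consistently across all $16$ refactorizations, because left and right associates are easily confused and $\bar P$ is easily written where $P$ is meant. I would guard against this by having the $\mathrm{PGL}_2(\F_3)$ model and the direct $\Hur$ computation independently produce the same four cycles.
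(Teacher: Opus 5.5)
Most of your plan is sound. The orbit computation is the paper's. Your representatives $1\pm i\pm j$ are pairwise non-associate. The criterion $P_aQ\bar P_b\in 3\Hur$ correctly detects $\pi_Q[P_a]$. Deducing the $4$-cycle shape from Cohn--Kumar is a legitimate shortcut to reading it off the sixteen refactorizations: $\legendre{5}{3}=-1$, and there are no fixed points since $-4$ is a non-square mod $3$ and $2+i\not\equiv n\pmod{3\Hur}$.

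The step that fails is your final verification target. You read ``disagree at every single class'' as $\sigma(C)\neq\tau(C)$ for all six pairs $\sigma\neq\tau$ and all four classes $C$. That claim is false. It is in fact impossible given what you proved one line earlier. If the four values at a class $C$ were pairwise distinct, they would exhaust the four-element set $\mathcal C_3$. One of them would then be $C$ itself, contradicting fixed-point-freeness. Concretely, take $P_0=\tfrac12(3+i+j+k)$ and $P_1=\tfrac12(3+i+j-k)$. Then $P_0(2+j)\bar P_1=\tfrac32(3+i+j+3k)$ and $P_0(2-i)\bar P_1=\tfrac32(3-i-j+3k)$, both in $3\Hur$. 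So $\pi_{2+j}$ and $\pi_{2-i}$ agree at $[P_0]$, and by the pigeonhole argument every class carries such a coincidence. The table check you describe would therefore come out negative, and the proof would not close.

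The theorem means the weaker statement the paper verifies column by column: at each class, the four values $\pi_Q(C)$ are not all equal. You can get this from your own ingredients without the table. Since $2-i=\overline{2+i}$, Lemma~\ref{lem:model} gives $\pi_{2-i}=\pi_{2+i}^{-1}$. Because $\pi_{2+i}$ is a $4$-cycle, $\pi_{2+i}^2$ is a fixed-point-free double transposition. Hence $\pi_{2+i}(C)\neq\pi_{2+i}^{-1}(C)$ for every $C$.

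Your stronger check would have delivered pairwise distinctness of the four cycles as a by-product, so that now needs its own argument. You can read it off the computed permutations. Alternatively, use faithfulness of the $\mathrm{PGL}_2(\F_3)$-action: $\pi_{Q_a}=\pi_{Q_b}$ iff $Q_a\equiv\pm Q_b\pmod{3\Hur}$. This fails for all six pairs, since the relevant sums and differences ($i-j$, $4+i+j$, $2i$, $4$, and their analogues) do not lie in $3\Hur$.
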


Theorems \ref{thm:A} and \ref{thm:B} complement \cite{CK,LM} rather than compete with them: those results compute the invariant layer exactly; ours show the layer is complete. No computational claim is made --- once a splitting $\Hur/p\Hur\cong M_2(\F_p)$ is fixed, every value of $\pi_Q$ is computed quickly by \cite{FGS}. The content is about \emph{invariants}: what can be known of $\pi_Q$ before the action of $Q$ on a supplied class is evaluated. The equivariance identity underlying the theorem is elementary; the point of this note is the sharp equivalence, the trivial-centralizer mechanism behind it, the explicit minimal witness, and the exact pricing of a single value (Proposition~\ref{prop:cost}).

\section{The matrix model}\label{sec:model}

Fix an odd prime $p$ and a prime $q\neq p$. Since $\Hur$ is a maximal order in a quaternion algebra ramified only at $2$ and $\infty$, reduction gives an $\F_p$-algebra isomorphism $\varphi:\Hur/p\Hur\xrightarrow{\ \sim\ }M_2(\F_p)$; fix one, called a \emph{frame}. For a Hurwitz prime $P$ of norm $p$, the matrix $\varphi(P)$ has rank exactly $1$: its determinant is $\Nm(P)\equiv0$, and rank $0$ would put $P$ in $p\Hur$, forcing $p^2\mid\Nm(P)$. Left unit multiples change $\varphi(P)$ by an invertible left factor, which preserves the row space; so each class $[P]\in\mathcal{C}_p$ has a well-defined line
\[
r_{[P]}\;:=\;\operatorname{row}\varphi(P)\ \in\ \mathbb{P}^1(\F_p),
\]
and $[P]\mapsto r_{[P]}$ is a bijection $\mathcal{C}_p\to\mathbb{P}^1(\F_p)$ \cite{FGS}. For invertible $X\in M_2(\F_p)$, write $R_X$ for the permutation $\ell\mapsto\ell X$ of $\mathbb{P}^1(\F_p)$; then $R_XR_Y=R_{XY}$ (acting left to right), the action factors through $\mathrm{PGL}_2(\F_p)$, and it is faithful, since a M\"obius transformation fixing all $p+1\geq3$ points of $\mathbb{P}^1$ is the identity.

\begin{lemma}\label{lem:model}
In any frame:
\begin{enumerate}
\item[(a)] $\pi_Q$ is right multiplication by $\varphi(Q)$; that is, $r_{\pi_Q[P]}=r_{[P]}\,\varphi(Q)$.
\item[(b)] the conjugation permutation $\rho_u[P]:=[uPu^{-1}]$, $u\in\Hur^\times$, is right multiplication by $\varphi(u)^{-1}$.
\end{enumerate}
In particular $\pi_Q$ is a well-defined bijection of $\mathcal{C}_p$, and $\pi_{\bar Q}=\pi_Q^{-1}$, since $\varphi(\bar Q)=\Nm(Q)\,\varphi(Q)^{-1}$ agrees with $\varphi(Q)^{-1}$ up to scalars.
\end{lemma}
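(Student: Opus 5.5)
The proof rests on one observation: the row space of a product $AB$ with $\operatorname{rank} A=1$ and $B$ invertible is $\operatorname{row}(A)\,B$. More generally, the row space of $AB$ lies in $\operatorname{row}(A)B$, with equality whenever $AB\neq0$. We also use the bijection $[P]\mapsto r_{[P]}$ quoted from \cite{FGS}.

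\emph{Part (a).} Let $PQ=Q'P'$ be a metacommutation, so that $\pi_Q[P]=[P']$. Applying the frame gives
\[
\varphi(P)\varphi(Q)=\varphi(Q')\varphi(P').
\]
Since $q\neq p$, both $\varphi(Q)$ and $\varphi(Q')$ are invertible, because their determinants are $q\not\equiv0$. The right-hand product has row space $\operatorname{row}\varphi(P')=r_{[P']}$: left multiplication by the invertible matrix $\varphi(Q')$ preserves row spaces. The left-hand product has row space $r_{[P]}\varphi(Q)$. This is a line and not zero, since $\varphi(Q)$ is invertible and $\varphi(P)$ has rank $1$. Hence
\[
r_{\pi_Q[P]}=r_{[P]}\,\varphi(Q).
\]
This argument also gives well-definedness for free. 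Replacing $P$ by $uP$ leaves $r_{[P]}$ unchanged, so it leaves $r_{[P']}$ unchanged. Because $[\cdot]\mapsto r_{[\cdot]}$ is injective, $[P']$ depends only on $[P]$, and in particular not on the unit-migration ambiguity in $(Q',P')$. Bijectivity follows because $R_{\varphi(Q)}$ is a bijection of $\mathbb{P}^1(\F_p)$ and $r$ is a bijection. This uses the existence of metacommutation from \cite{CS}, which the introduction already assumes.

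\emph{Part (b).} Conjugation by a unit preserves the norm, so $uPu^{-1}$ is again a prime of norm $p$, and $\varphi(uPu^{-1})=\varphi(u)\varphi(P)\varphi(u)^{-1}$. The left factor $\varphi(u)$ is invertible and does not change the row space. Multiplying on the right by the invertible $\varphi(u)^{-1}$ sends the row space to $r_{[P]}\varphi(u)^{-1}$. So $\rho_u$ corresponds to $R_{\varphi(u)^{-1}}$. It is well-defined on classes because $u(vP)u^{-1}=(uvu^{-1})(uPu^{-1})$ and $uvu^{-1}$ is a unit.

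\emph{Final claim.} We have $Q\bar Q=q$, so $\varphi(Q)\varphi(\bar Q)=q\cdot I$, which gives $\varphi(\bar Q)=q\,\varphi(Q)^{-1}$. Since $q$ is a nonzero scalar in $\F_p$, $R_{\varphi(\bar Q)}=R_{\varphi(Q)}^{-1}$. By (a), which applies because $\bar Q$ is also a prime of norm $q$, we get $\pi_{\bar Q}=\pi_Q^{-1}$.

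The only delicate point is the nonvanishing of $\varphi(P)\varphi(Q)$, which makes its row space a genuine line. This holds because $\varphi(Q)$ is invertible, which is exactly where the hypothesis $q\neq p$ enters.
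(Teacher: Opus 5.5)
Your proof is correct and follows the paper's argument: you push the identity $PQ=Q'P'$ through the frame and compare row spaces, and you handle (b) and the claim $\pi_{\bar Q}=\pi_Q^{-1}$ the same way the paper does. The differences are cosmetic. You identify $\operatorname{row}\bigl(\varphi(Q')\varphi(P')\bigr)$ with $r_{[P']}$ via the invertibility of $\varphi(Q')$, whereas the paper uses the containment $\operatorname{row}\varphi(L)\subseteq\operatorname{row}\varphi(P')$ plus the fact that both are lines. You also spell out the independence from unit migration and the class-level well-definedness of $\rho_u$, which the paper leaves implicit.
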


\begin{proof}
(a) $\varphi(Q)$ is invertible since $\det\varphi(Q)\equiv\Nm(Q)=q\not\equiv0$. Write $L=PQ=Q'P'$. Then $\varphi(L)=\varphi(P)\varphi(Q)$ has row space $\operatorname{row}\varphi(P)\cdot\varphi(Q)$, a line. Also $\varphi(L)=\varphi(Q')\varphi(P')$, and every row of a product lies in the row space of the right factor, so $\operatorname{row}\varphi(L)\subseteq\operatorname{row}\varphi(P')$; both are lines, hence they are equal. Thus $r_{[P']}=r_{[P]}\varphi(Q)$; this depended only on $[P]$, and $R_{\varphi(Q)}$ is a bijection, so $\pi_Q$ is one. (b) $\varphi(uPu^{-1})=\varphi(u)\varphi(P)\varphi(u)^{-1}$ has row space $\operatorname{row}\varphi(P)\cdot\varphi(u)^{-1}$.
\end{proof}

\begin{lemma}\label{lem:centralizer}
Let $p$ be odd.
\begin{enumerate}
\item[(a)] If $u\in\Hur^\times$ satisfies $u\equiv n\pmod{p\Hur}$ with $n\in\mathbb{Z}$, then $u=\pm1$. Hence the classes $\bar\imath,\bar\jmath,\bar k$ of $\varphi(i),\varphi(j),\varphi(k)$ in $\mathrm{PGL}_2(\F_p)$ are three distinct commuting involutions with $\bar\imath\,\bar\jmath=\bar k$, and the class $\bar\rho$ of $\varphi(\rho)$, $\rho=\tfrac12(1+i+j+k)$, conjugates $\bar\imath\to\bar\jmath\to\bar k\to\bar\imath$.
\item[(b)] The centralizer in $\mathrm{PGL}_2(\F_p)$ of the image of $\Hur^\times$ is trivial.
\end{enumerate}
\end{lemma}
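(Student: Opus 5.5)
For part (a), I will use norms. The units of $\Hur$ are $\pm1,\pm i,\pm j,\pm k$ and $\tfrac12(\pm1\pm i\pm j\pm k)$. Suppose $u\equiv n\pmod{p\Hur}$. Then $u-n\in p\Hur$, so $p^2\mid\Nm(u-n)$. Write $u=a+v$ with $a$ its real part and $v$ pure. Then $\Nm(u-n)=(a-n)^2+\Nm(v)$.

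I will handle the three kinds of unit separately.
\begin{itemize}
\item If $u=\pm i$ (and similarly for $j,k$), then $\Nm(u-n)=n^2+1$. Also $u-n\in p\Hur$ means $(u-n)/p\in\Hur$. The $i$-coordinate of this element is $\pm1/p$, which is not in $\tfrac12\mathbb{Z}$ for odd $p$. This is a contradiction.
\item If $u$ is a half-unit, the pure part of $(u-n)/p$ has coordinates $\pm1/(2p)$, again not in $\tfrac12\mathbb{Z}$. This is a contradiction.
\end{itemize}
The cleaner uniform argument is simply this: $u-n\in p\Hur$ forces every imaginary coordinate of $u$ to lie in $\tfrac p2\mathbb{Z}$. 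Since those coordinates lie in $\{0,\pm\tfrac12,\pm1\}$ and $p\ge3$, they must all vanish, so $u=\pm1$.

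The consequences follow from this. Since $\varphi$ is an algebra isomorphism, $\varphi(u)$ is scalar if and only if $u\equiv n\pmod p$. The units $\pm i,\pm j,\pm k,\pm ij^{-1},\dots$ give non-scalar matrices, so $\bar\imath,\bar\jmath,\bar k$ are nontrivial. They are distinct because, for example, $\varphi(i)\varphi(j)^{-1}=\varphi(-ij)=\varphi(-k)$ is non-scalar. They are involutions because $i^2=-1$ is scalar. They commute because $ij=-ji$ differs by a scalar. The relation $\bar\imath\bar\jmath=\bar k$ is immediate from $ij=k$. Finally, the identities $\rho i\rho^{-1}=j$, $\rho j\rho^{-1}=k$, $\rho k\rho^{-1}=i$ in $\Hur$ give the conjugation statement for $\bar\rho$.

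For part (b), let $g\in\mathrm{PGL}_2(\F_p)$ centralize the image. The plan has three steps.
\begin{enumerate}
\item Lift $g$ to $G\in\mathrm{GL}_2(\F_p)$. Commuting with $\bar\imath$ in $\mathrm{PGL}_2$ means $G\varphi(i)G^{-1}=\lambda\varphi(i)$. Squaring gives $\lambda^2=1$, so $\lambda=\pm1$.
\item Define a character $\chi:\{i,j,k\}\to\{\pm1\}$ from these signs, so that $G$ acts on $\varphi$ of the pure span by sign changes. Multiplicativity, applied to $k=ij$, gives $\chi(k)=\chi(i)\chi(j)$. Conjugation by $G$ must also commute with $\bar\rho$, whose conjugation cycles $i\to j\to k$. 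Conjugating by $\varphi(\rho)$ shows $\chi(j)=\chi(i)$ and $\chi(k)=\chi(j)$. Combined with multiplicativity this gives $\chi(i)^2=\chi(i)$, so $\chi\equiv1$.
\item Hence $G$ commutes with $\varphi(i),\varphi(j),\varphi(k)$ and with $1$. These span $M_2(\F_p)$, because $1,i,j,k$ span $\Hur/p\Hur$ when $2$ is invertible mod $p$. So $G$ is central, hence scalar, and $g=1$.
\end{enumerate}

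The main obstacle is the sign-character step. Commuting only in $\mathrm{PGL}_2$ allows the sign twists, and it is precisely the $3$-cycle from $\bar\rho$ that kills them. Without $\rho$, the Klein four-group $\{1,\bar\imath,\bar\jmath,\bar k\}$ is self-centralizing but not trivially centralized.
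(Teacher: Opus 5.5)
Your proof is correct. Part (a) is the paper's argument: the imaginary coordinates of $u-n\in p\Hur$ lie in $\tfrac p2\mathbb{Z}$, while those of a unit lie in $\{0,\pm\tfrac12,\pm1\}$. The opening norm computation and case split are unnecessary.

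Part (b) takes a genuinely different route. The paper first proves that $V=\{1,\bar\imath,\bar\jmath,\bar k\}$ is self-centralizing in $\mathrm{PGL}_2(\overline{\F}_p)$ by normal forms. It passes to the algebraic closure, conjugates $\bar\imath$ to $[\operatorname{diag}(1,-1)]$ and $\bar\jmath$ to the antidiagonal swap, and computes the centralizers of diagonal and antidiagonal classes by hand. Only then does it use $\bar\rho$, to exclude the three nontrivial elements of $V$.

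You instead stay over $\F_p$ and inside the algebra. The relation $\varphi(i)^2=-1$ converts projective commutation into a sign, $ij=k$ makes the signs multiplicative, and $\bar\rho$ makes them constant, hence trivial. Then $G$ centralizes the spanning set $\varphi(1),\varphi(i),\varphi(j),\varphi(k)$ of $M_2(\F_p)$, so it is scalar. One point deserves a line: the scalar $\mu$ in $G\varphi(\rho)G^{-1}=\mu\varphi(\rho)$ cancels when you conjugate $\varphi(i)$ into $\varphi(j)$. Your phrase ``conjugation by $G$ must also commute with $\bar\rho$'' uses this silently.

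Your sign step is the algebra-side form of the same self-centralizing fact. Without $\rho$, the admissible sign patterns are exactly those of conjugation by $1,i,j,k$, which would give $g\in V$. What your route buys is that it needs no diagonalization, no square roots and no field extension, and it uses $\varphi$ only as an algebra isomorphism. What the paper's route buys is a statement intrinsic to $\mathrm{PGL}_2$: any Klein four-subgroup is self-centralizing, regardless of where it comes from.
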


\begin{proof}
(a) Every nonzero coordinate of an element of $p\Hur$ has absolute value at least $p/2\geq3/2$, while the imaginary coordinates of a Hurwitz unit have absolute value at most $1$. So $u-n\in p\Hur$ forces the imaginary part of $u$ to vanish, and the only real units are $\pm1$. Applied to $ij^{-1}=-k$ and its companions, this shows $\bar\imath,\bar\jmath,\bar k$ are pairwise distinct and nontrivial; they are involutions since $i^2=j^2=k^2=-1$ is scalar, they commute since $ij=-ji$, and $\rho i\rho^{-1}=j$, $\rho j\rho^{-1}=k$ hold in $\Hur$.

(b) Let $V=\{1,\bar\imath,\bar\jmath,\bar k\}\cong(\mathbb{Z}/2)^2$, and let $g$ centralize the image of $\Hur^\times$; then $g$ centralizes $V$. We claim $C_{\mathrm{PGL}_2(\overline{\F}_p)}(V)=V$, which suffices since $\mathrm{PGL}_2(\F_p)\subseteq\mathrm{PGL}_2(\overline{\F}_p)$. A noncentral involution of $\mathrm{PGL}_2(\overline{\F}_p)$ ($p$ odd) lifts to $s\in\mathrm{GL}_2(\overline{\F}_p)$ with $s^2$ scalar; scaling by a square root makes $s^2=1$ with $s\neq\pm1$, so after conjugation $\bar\imath=\sigma:=[\operatorname{diag}(1,-1)]$. The centralizer of $\sigma$ consists of the diagonal and antidiagonal classes. A diagonal class $[\operatorname{diag}(a,b)]$ is an involution only if $(a/b)^2=1$, i.e.\ it is $1$ or $\sigma$; so $\bar\jmath$ is antidiagonal, and conjugating by a suitable diagonal class (which fixes $\sigma$) makes $\bar\jmath=\tau:=\left[\begin{smallmatrix}0&1\\1&0\end{smallmatrix}\right]$. Now: a diagonal class commutes with $\tau$ iff $[\operatorname{diag}(b,a)]=[\operatorname{diag}(a,b)]$ iff $(a/b)^2=1$, giving $\{1,\sigma\}$; an antidiagonal class $\left[\begin{smallmatrix}0&a\\b&0\end{smallmatrix}\right]$ commutes with $\tau$ iff $(a/b)^2=1$, giving $\{\tau,\sigma\tau\}$. Hence $C(\{\sigma,\tau\})=\{1,\sigma,\tau,\sigma\tau\}=V$, so $g\in V$. But conjugation by $\bar\rho$ permutes $\bar\imath,\bar\jmath,\bar k$ cyclically, so no element of $V\smallsetminus\{1\}$ centralizes $\bar\rho$. Hence $g=1$.
\end{proof}

\begin{proof}[Proof of Theorem~\ref{thm:A}]
Conjugating $PQ=Q'P'$ by a unit $u$ gives
\[
(uPu^{-1})(uQu^{-1})=(uQ'u^{-1})(uP'u^{-1})
\]
with norms preserved, whence
\begin{equation}\label{eq:equiv}
\pi_{uQu^{-1}} \;=\; \rho_u\,\pi_Q\,\rho_u^{-1}\qquad(u\in\Hur^\times).
\end{equation}
Fix a frame. By Lemma~\ref{lem:model}, \eqref{eq:equiv} and faithfulness, condition (i) says $[\varphi(Q)]$ commutes in $\mathrm{PGL}_2(\F_p)$ with $[\varphi(u)]$ for every unit $u$; by Lemma~\ref{lem:centralizer} this holds iff $[\varphi(Q)]=1$, i.e.\ iff $\varphi(Q)$ is scalar, which --- since $\varphi$ is an $\F_p$-algebra isomorphism and the scalars of $\Hur/p\Hur$ are the classes of rational integers --- is condition (iii). And (ii) says $[\varphi(Q)]$ acts trivially on $\mathbb{P}^1(\F_p)$, which by faithfulness is again $[\varphi(Q)]=1$. For the final sentence: on such a set, $I$ is constant on the orbit of a $Q$ with $\pi_Q\neq\mathrm{id}$, while by (i)$\Rightarrow$(ii) the values $\pi_{uQu^{-1}}$ are not all equal.
\end{proof}

\begin{remark}\label{rem:residue}
Identity \eqref{eq:equiv} also delimits the residue: the unit-conjugacy class of $Q$ determines $\pi_Q$ exactly up to conjugation by the group $\{\rho_u:u\in\Hur^\times\}\subseteq\operatorname{Sym}(\mathcal{C}_p)$ --- relabelling by the algebra's own motions --- and Theorem~\ref{thm:A} says this bound is attained whenever it is not vacuous.
\end{remark}

\section{The minimal witness}\label{sec:witness}

Let $p=3$. The four classes of norm-$3$ primes are represented by
\[
P_0=\tfrac12(3+i+j+k),\quad P_1=\tfrac12(3+i+j-k),\quad
P_2=\tfrac12(3+i-j+k),\quad P_3=\tfrac12(3+i-j-k),
\]
which are pairwise non-left-associate (machine check, Remark~\ref{rem:verify}), hence exhaust $\mathcal{C}_3$. Take $q=5$ and
\[
Q\;=\;2+i,\qquad 2+j,\qquad 2+k,\qquad 2-i.
\]
With $\rho=\tfrac12(1+i+j+k)$, using $\rho i\rho^{-1}=j$, $\rho j\rho^{-1}=k$, and $jij^{-1}=-i$:
\[
\rho\,(2+i)\,\rho^{-1}=2+j,\qquad
\rho\,(2+j)\,\rho^{-1}=2+k,\qquad
j\,(2+i)\,j^{-1}=2-i,
\]
so the four primes form one $\Hur^\times$-conjugacy orbit and agree under \emph{every} conjugation-invariant function --- not only norm ($5$), trace ($4$), and $\tr^2-4q=-4$, but all of them at once.

Direct Euclidean refactorization (all sixteen products in Appendix~\ref{app:table}; one specimen below) gives:
\[
\begin{array}{c|c|c}
Q & \pi_Q \text{ in cycle notation} & \big(\pi_Q(P_0),\,\pi_Q(P_1),\,\pi_Q(P_2),\,\pi_Q(P_3)\big)\\\hline
2+i & (P_0\,P_2\,P_3\,P_1) & (P_2,\ P_0,\ P_3,\ P_1)\\
2+j & (P_0\,P_1\,P_2\,P_3) & (P_1,\ P_2,\ P_3,\ P_0)\\
2+k & (P_0\,P_3\,P_1\,P_2) & (P_3,\ P_2,\ P_0,\ P_1)\\
2-i & (P_0\,P_1\,P_3\,P_2) & (P_1,\ P_3,\ P_0,\ P_2)
\end{array}
\]
Reading down any column of values, the four entries are never all equal: the conjugates disagree at $P_0$ (values $P_2,P_1,P_3,P_1$), at $P_1$ ($P_0,P_2,P_2,P_3$), at $P_2$ ($P_3,P_3,P_0,P_0$), and at $P_3$ ($P_1,P_0,P_1,P_2$). The specimen line:
\[
P_0(2+j)=\tfrac12(5+i+5j+3k)
=\underbrace{\tfrac12(3+i+j+3k)}_{\text{norm }5}
\underbrace{\tfrac12(3+i+j-k)}_{=\,P_1,\ \text{norm }3}
\Rightarrow \pi_{2+j}(P_0)=P_1,
\]
verified by expanding the two right-hand factors: their product is $10+2i+10j+6k$.

\begin{proof}[Proof of Theorem~\ref{thm:B}]
Immediate from the orbit computation above, the machine-verified table, and the observation on its columns.
\end{proof}

\begin{example}[equivariance in action]\label{ex:equiv}
The table is not four unrelated accidents; identity \eqref{eq:equiv} generates it. Conjugation by $\rho$ fixes $P_0$ and $3$-cycles the other representatives: $\rho_\rho=(P_1\,P_3\,P_2)$ on classes (machine check). Relabelling the first cycle by $\rho_\rho$,
\[
\rho_\rho\,(P_0\,P_2\,P_3\,P_1)\,\rho_\rho^{-1}
=(P_0\ \rho_\rho P_2\ \rho_\rho P_3\ \rho_\rho P_1)
=(P_0\,P_1\,P_2\,P_3)=\pi_{2+j},
\]
exactly as $\rho(2+i)\rho^{-1}=2+j$ demands; likewise $\rho_j=(P_0\,P_2)(P_1\,P_3)$ conjugates $\pi_{2+i}$ into $\pi_{2-i}$. The invariant layer cannot see these relabellings, and the labelled permutation, beyond its cycle type, consists of nothing else.
\end{example}

\begin{remark}\label{rem:inverse}
$\pi_{2-i}=\pi_{2+i}^{-1}$ in the table, as Lemma~\ref{lem:model} forces: $2-i=\overline{2+i}$.
\end{remark}

\begin{remark}[verification]\label{rem:verify}
The ancillary file \texttt{anc/verify\_witness.py} recomputes, in exact arithmetic: the sixteen refactorizations by direct unit search over the $24$ Hurwitz units (Route 1); the same four permutations independently via the conjugation action on trace-zero lines in $\Hur/3\Hur$, in the model of \cite{FGS} (Route 2), sharing no code path with Route 1; the pairwise non-associativity of $P_0,\dots,P_3$; the conjugacy-orbit identities; the columnwise disagreements; and both instances of \eqref{eq:equiv} in Example~\ref{ex:equiv}. The two routes agree on all four permutations. The computation also pins a convention: in our normalization the line action of Route 2 reads $t\mapsto\bar Q\,t\,\bar Q^{-1}$; the opposite direction returns exactly $\pi_Q^{-1}$.
\end{remark}

\section{Where the remaining information lives}\label{sec:cost}

Theorem~\ref{thm:A} says conjugation invariants stop at the cycle type. Two standard observations price what the rest of $\pi_Q$ costs.

\begin{proposition}[frames form a torsor]\label{prop:torsor}
Any two frames $\varphi,\varphi':\Hur/p\Hur\to M_2(\F_p)$ differ by an inner automorphism of $M_2(\F_p)$, and every class in $\mathrm{PGL}_2(\F_p)$ so occurs: the set of frames is a $\mathrm{PGL}_2(\F_p)$-torsor. Since $\mathrm{PGL}_2(\F_p)$ is sharply $3$-transitive on $\mathbb{P}^1(\F_p)$, the identification $\mathcal{C}_p\cong\mathbb{P}^1(\F_p)$ is canonical only up to M\"obius relabelling: no assignment of coordinates to the classes, not even to three of them, is preferred by the algebra.
\end{proposition}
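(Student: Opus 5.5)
The argument has three parts: Skolem--Noether for the "differ by an inner automorphism" claim, a direct construction for "every class occurs", and sharp $3$-transitivity for the final sentence.

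\textbf{Frames differ by inner automorphisms.} Given two frames $\varphi,\varphi'$, the composite $\psi=\varphi'\circ\varphi^{-1}$ is an $\F_p$-algebra automorphism of $M_2(\F_p)$. The algebra $M_2(\F_p)$ is central simple over $\F_p$, so by Skolem--Noether $\psi(X)=gXg^{-1}$ for some $g\in\mathrm{GL}_2(\F_p)$. Since $g$ is determined up to the center $\F_p^\times$, $\psi$ has a well-defined class $[g]\in\mathrm{PGL}_2(\F_p)$. Alternatively, one can avoid Skolem--Noether: $\psi$ carries the idempotent $E_{11}$ to a rank-one idempotent, and a change of basis adapted to $\psi(E_{11})$ and $\psi(E_{12})$ exhibits $g$ directly. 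Either route suffices.

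\textbf{The action is a torsor.} Conversely, for any $[g]$ the map $\operatorname{Ad}_g\circ\varphi$ is again an algebra isomorphism, hence a frame. This gives a left action of $\mathrm{PGL}_2(\F_p)$ on the set of frames. The action is transitive by the previous step. It is free because $\operatorname{Ad}_g=\mathrm{id}$ means $g$ commutes with all of $M_2(\F_p)$, so $g$ is scalar and $[g]=1$. Transitivity plus freeness is exactly the torsor statement; nonemptiness of the set of frames is assumed in Section~\ref{sec:model}.

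\textbf{Effect on the line labelling.} Changing frames by $g$ replaces $\varphi(P)$ with $g\varphi(P)g^{-1}$. The left factor $g$ does not change the row space, so $r'_{[P]}=r_{[P]}\,g^{-1}$, exactly as in Lemma~\ref{lem:model}(b). Hence the bijection $\mathcal{C}_p\to\mathbb{P}^1(\F_p)$ changes by the Möbius map $R_{g^{-1}}$, and all Möbius maps arise this way. Sharp $3$-transitivity of $\mathrm{PGL}_2(\F_p)$ on $\mathbb{P}^1(\F_p)$ is standard: existence comes from sending $0,1,\infty$ to any three distinct points, and uniqueness from the fact that a map fixing $0,1,\infty$ is the identity. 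Applying it, for any three distinct classes $C_1,C_2,C_3$ and any three distinct target points there is a frame assigning those coordinates. So no coordinate assignment, even to three classes, is singled out by the algebra.

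\textbf{Expected obstacle.} The only step that needs care is the transitivity claim, that is, invoking Skolem--Noether correctly. One must check that $\psi$ fixes the scalars, which holds because frames are $\F_p$-algebra maps. The rest is bookkeeping.
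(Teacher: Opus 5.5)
Your proposal is correct and follows essentially the paper's proof. Both use Skolem--Noether on $\varphi'\circ\varphi^{-1}$, then post-composition with $\operatorname{Ad}_g$ for the converse, trivial exactly for scalar $g$, and then classical sharp $3$-transitivity. Your explicit computation $r'_{[P]}=r_{[P]}\,g^{-1}$ usefully spells out how a change of frame becomes a M\"obius relabelling of $\mathcal{C}_p$, a link the paper leaves implicit.
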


\begin{proof}
Skolem--Noether applied to $\varphi'\circ\varphi^{-1}$; conversely post-composing a frame with conjugation by $g\in\mathrm{GL}_2(\F_p)$ is again a frame, and trivially so exactly for scalar $g$. Sharp $3$-transitivity is classical.
\end{proof}

\begin{proposition}[one value costs one coset]\label{prop:cost}
Fix a frame and let $g_Q:=[\varphi(Q)]$, so $\pi_Q=R_{g_Q}$ \textup{(}Lemma~\ref{lem:model}\textup{)}. Fix a class $C$ with stabilizer $G_C$ under the right action. Then for $g_1,g_2\in\mathrm{PGL}_2(\F_p)$,
\[
C\,g_1=C\,g_2 \iff G_C\,g_1=G_C\,g_2 ,
\]
so the single value $\pi_Q(C)$ carries exactly the coset $G_C\,g_Q$, no more and no less; and the full labelled $\pi_Q$ carries $g_Q$ itself.
\end{proposition}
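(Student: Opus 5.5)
The argument is orbit--stabilizer for the right action of $\mathrm{PGL}_2(\F_p)$ on $\mathbb{P}^1(\F_p)$, transported to $\mathcal{C}_p$ through the fixed frame. By the bijection $[P]\mapsto r_{[P]}$ and Lemma~\ref{lem:model}(a), I identify $C$ with its line $r_C$ and write $C\,g$ for $r_C\,g$. Here $g=[X]$, and the action is well defined on classes because scalars fix lines. The action is a right action: $R_XR_Y=R_{XY}$ gives $(C\,g)\,h=C\,(gh)$. I define the stabilizer as $G_C=\{h: C\,h=C\}$; it is a subgroup.

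For the equivalence I compute directly. Since $C\,g_1=C\,g_2$ is equivalent to $C\,g_1g_2^{-1}=C$, that is, to $g_1g_2^{-1}\in G_C$, it is also equivalent to $G_Cg_1=G_Cg_2$. This is the usual criterion for equality of right cosets. The only point needing care is the order of multiplication. Because the action is on the right, the relevant cosets are the right cosets $G_C g$, and the condition is $g_1g_2^{-1}\in G_C$, not $g_1^{-1}g_2\in G_C$. I expect this bookkeeping to be the only real obstacle, and I will check it against the left-to-right convention $R_XR_Y=R_{XY}$ stated before Lemma~\ref{lem:model}.

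Next I interpret the equivalence. The map $G_C g\mapsto C\,g$ from right cosets to $\mathbb{P}^1(\F_p)$ is well defined and injective by the equivalence. It is surjective because $\mathrm{PGL}_2(\F_p)$ acts transitively, and in fact sharply $3$-transitively, as in Proposition~\ref{prop:torsor}. So it is a bijection between $G_C\backslash\mathrm{PGL}_2(\F_p)$ and $\mathcal{C}_p$. Since $\pi_Q(C)=C\,g_Q$, knowing the value $\pi_Q(C)$ is the same as knowing the coset $G_Cg_Q$. It gives no more, because every $g$ in that coset yields the same value, and no less, by injectivity. As a sanity check, $|G_C|=|\mathrm{PGL}_2(\F_p)|/(p+1)=p(p-1)$, which is the Borel subgroup.

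Finally, the full labelled $\pi_Q=R_{g_Q}$ determines $g_Q$ because the action is faithful, as noted in Section~\ref{sec:model}. Conversely, $g_Q$ determines $\pi_Q$. Equivalently, this is the equivalence applied simultaneously to all classes: the intersection of all the stabilizers $G_C$ is the kernel of the action, which is trivial.
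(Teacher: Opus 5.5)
Your proposal is correct and follows the paper's argument: orbit--stabilizer for the transitive right action gives the bijection $G_C\backslash\mathrm{PGL}_2(\F_p)\to\mathbb{P}^1(\F_p)$, and faithfulness gives the last clause. Your explicit checks — that a right action pairs with right cosets via $g_1g_2^{-1}\in G_C$, and that $|G_C|=p(p-1)$ — only spell out what the paper leaves implicit.
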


\begin{proof}
The right action is transitive, so $g\mapsto Cg$ induces a bijection $G_C\backslash\mathrm{PGL}_2(\F_p)\to\mathbb{P}^1(\F_p)$: orbit--stabilizer. The last clause is faithfulness.
\end{proof}

\begin{example}
For $p=3$: $|\mathrm{PGL}_2(\F_3)|=24$, $|G_C|=6$, so one value selects one of $24/6=4$ cosets --- one of four destinations --- while the full casting selects one of $24$ group elements. Both strictly exceed what invariants supply, which by Theorem~\ref{thm:A} is the conjugacy class of $g_Q$ alone.
\end{example}

The ladder is strict at every rung: invariants determine the cycle type completely \cite{CK,LM} and nothing more (Theorem~\ref{thm:A}); one value is one stabilizer coset (Proposition~\ref{prop:cost}); the labelled permutation is the group element; and there is no distinguished frame in which to read any of it (Proposition~\ref{prop:torsor}).

\section{A motivating picture}\label{sec:pictures}

We record, as motivation and not proof, the drawn presentation from which the witness was extracted; the full development is \cite{Fried}. The multiplications of $\mathbb{Z}[i]$ and $\mathbb{Z}[\omega]$ can be drawn as configurations of signed chords on the coefficient array $\left(\begin{smallmatrix}a&b\\ c&d\end{smallmatrix}\right)$: each chord is a product of two coefficients, signed by how the term enters the expansion (Figure~\ref{fig:legend}); the configurations are the structure constants made visible. The drawn assembly of the quaternion product (Figure~\ref{fig:box}) splits each component of
\begin{align*}
pq &= (a_1a_2-b_1b_2-c_1c_2-d_1d_2)
 + (a_1b_2+b_1a_2+c_1d_2-d_1c_2)\,i \\
 &\quad + (a_1c_2-b_1d_2+c_1a_2+d_1b_2)\,j
 + (a_1d_2+b_1c_2-c_1b_2+d_1a_2)\,k
\end{align*}
into plain and conjugated Gaussian pieces: writing $q=z+wj$ with
$z,w\in\mathbb{Z}[i]$ and using $jz=\bar z\,j$ gives
$q_1q_2=(z_1z_2-w_1\bar w_2)+(z_1w_2+w_1\bar z_2)\,j$, so every chord is a
term of a product in $\mathbb{Z}[i]$, half of them with one factor
conjugated. The splitting elects one imaginary unit to span the complex
plane, and the unit rotation $i\to j\to k$ --- conjugation by
$\rho=\tfrac12(1+i+j+k)$, which satisfies $\rho^2=\rho-1$ and so generates
a copy of $\mathbb{Z}[\omega]$ inside $\Hur$ --- does not preserve that
election. The drawing is a frame; the algebra permutes its frames (Proposition~\ref{prop:torsor}). Theorem~\ref{thm:B} is that instability made arithmetic: it carries $2+i$ through the very motion the drawn assembly refuses to close under, and the labelled permutation moves while every invariant stands still.

\begin{figure}[htbp]
\centering
\includegraphics[width=0.28\textwidth]{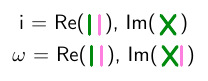}
\caption{The two planar multiplications, drawn. For each of $i$ and $\omega$, the configurations in which the real and imaginary parts of a product assemble: chords are coefficient products, signed by how the term enters the expansion.}
\label{fig:legend}
\end{figure}

\begin{figure}[htbp]
\centering
\includegraphics[width=0.78\textwidth]{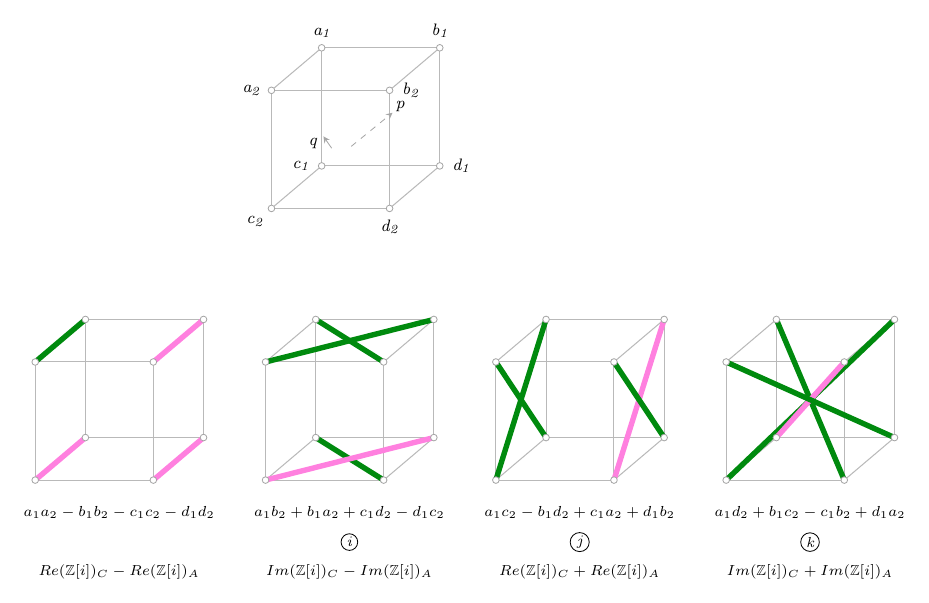}
\caption{The quaternion product split across the cube: each panel is one output component of $pq$; chords are the signed terms of the displayed expansion. Each imaginary component carries exactly one minus, in a different place each time; the assembly is built from plain and conjugated Gaussian multiplications
(legend in Figure~\ref{fig:legend}), and its splitting does not close under
the unit rotation $i\to j\to k$.}
\label{fig:box}
\end{figure}

\appendix

\section{The sixteen refactorizations}\label{app:table}

Each line is an identity in $\Hur$, verifiable by expanding the single quaternion product on the right; the first factor has norm $5$, the second is the listed representative of norm $3$, and the class map is read off the second factor.

\begin{alignat*}{2}
P_0(2+i) = \tfrac12(5+5i+3j+k) &= \tfrac12(3+i+3j+k)\cdot \tfrac12(3+i-j+k) &&\ \Rightarrow\ P_0\mapsto P_2,\\
P_1(2+i) = \tfrac12(5+5i+j-3k) &= \tfrac12(3+i+j-3k)\cdot \tfrac12(3+i+j+k) &&\ \Rightarrow\ P_1\mapsto P_0,\\
P_2(2+i) = \tfrac12(5+5i-j+3k) &= \tfrac12(3+i-j+3k)\cdot \tfrac12(3+i-j-k) &&\ \Rightarrow\ P_2\mapsto P_3,\\
P_3(2+i) = \tfrac12(5+5i-3j-k) &= \tfrac12(3+i-3j-k)\cdot \tfrac12(3+i+j-k) &&\ \Rightarrow\ P_3\mapsto P_1,\\[4pt]
P_0(2+j) = \tfrac12(5+i+5j+3k) &= \tfrac12(3+i+j+3k)\cdot \tfrac12(3+i+j-k) &&\ \Rightarrow\ P_0\mapsto P_1,\\
P_1(2+j) = \tfrac12(5+3i+5j-k) &= (1+2j)\cdot \tfrac12(3+i-j+k) &&\ \Rightarrow\ P_1\mapsto P_2,\\
P_2(2+j) = \tfrac12(7+i+j+3k) &= \tfrac12(3-i+j+3k)\cdot \tfrac12(3+i-j-k) &&\ \Rightarrow\ P_2\mapsto P_3,\\
P_3(2+j) = \tfrac12(7+3i+j-k) &= (2-k)\cdot \tfrac12(3+i+j+k) &&\ \Rightarrow\ P_3\mapsto P_0,\\[4pt]
P_0(2+k) = \tfrac12(5+3i+j+5k) &= (1+2k)\cdot \tfrac12(3+i-j-k) &&\ \Rightarrow\ P_0\mapsto P_3,\\
P_1(2+k) = \tfrac12(7+3i+j+k) &= (2+j)\cdot \tfrac12(3+i-j+k) &&\ \Rightarrow\ P_1\mapsto P_2,\\
P_2(2+k) = \tfrac12(5+i-3j+5k) &= \tfrac12(3+i-3j+k)\cdot \tfrac12(3+i+j+k) &&\ \Rightarrow\ P_2\mapsto P_0,\\
P_3(2+k) = \tfrac12(7+i-3j+k) &= \tfrac12(3-i-3j+k)\cdot \tfrac12(3+i+j-k) &&\ \Rightarrow\ P_3\mapsto P_1,\\[4pt]
P_0(2-i) = \tfrac12(7-i+j+3k) &= \tfrac12(3-i-j+3k)\cdot \tfrac12(3+i+j-k) &&\ \Rightarrow\ P_0\mapsto P_1,\\
P_1(2-i) = \tfrac12(7-i+3j-k) &= \tfrac12(3-i+3j+k)\cdot \tfrac12(3+i-j-k) &&\ \Rightarrow\ P_1\mapsto P_3,\\
P_2(2-i) = \tfrac12(7-i-3j+k) &= \tfrac12(3-i-3j-k)\cdot \tfrac12(3+i+j+k) &&\ \Rightarrow\ P_2\mapsto P_0,\\
P_3(2-i) = \tfrac12(7-i-j-3k) &= \tfrac12(3-i+j-3k)\cdot \tfrac12(3+i-j+k) &&\ \Rightarrow\ P_3\mapsto P_2.
\end{alignat*}

The four permutations, on $(P_0,P_1,P_2,P_3)$:
\[
(P_0\,P_2\,P_3\,P_1),\quad (P_0\,P_1\,P_2\,P_3),\quad (P_0\,P_3\,P_1\,P_2),\quad (P_0\,P_1\,P_3\,P_2).
\]

\end{document}